\newcommand{\R}{\mathbb{R}}
\newcommand{\eps}{\varepsilon}
\newcommand{\ilog}{\textsc{iLog}}
\DeclareMathOperator{\util}{util}
\newtheorem{theorem}{Theorem}[section]
\newtheorem{lemma}[theorem]{Lemma}
\newtheorem{observation}{Observation}[section]
\newtheorem{invariant}{Invariant}
\DeclareMathOperator*{\poly}{poly}
\newtcolorbox[auto counter, number within=section]{algbox}[2][]{%
    title=Algorithm~\thetcbcounter: {#2}, #1}
\newtcolorbox[auto counter, number within=section]{funcbox}[2][]{%
    title= {#2}, #1}
\newcommand{\erclogowrapped}[1]{%
\setlength\intextsep{0pt}%
\begin{wrapfigure}[3]{r}{#1}%
\includegraphics[width=#1]{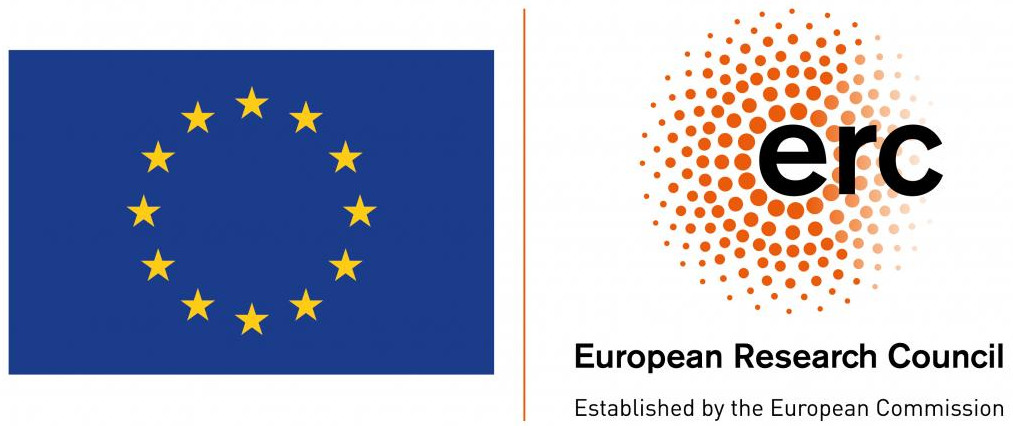}%
\end{wrapfigure}%
}
\newcommand{\myparagraph}[1]{\noindent {\textbf{#1}} }
\title{Multiplicative Auction Algorithm for Approximate Maximum Weight Bipartite Matching}
\author{
Da Wei Zheng 
\\ \small{University of Illinois Urbana-Champaign}
\\ \small{Urbana, IL, USA}
\and
Monika Henzinger
\\ \small{Institute of Science and Technology Austria }
\\ \small{Klosterneuburg, Austria}
}
\date{}
\begin{document}

\maketitle

\begin{abstract}
We present an \emph{auction algorithm}  using multiplicative instead of constant weight updates to compute a $(1-\eps)$-approximate maximum weight matching (MWM) in a bipartite graph with $n$ vertices and $m$ edges in time $O(m\eps^{-1})$, beating the running time of the fastest known approximation algorithm of Duan and Pettie [JACM '14] that runs in $O(m\eps^{-1}\log \eps^{-1})$. 
Our algorithm is very simple and it can be extended to give a dynamic data structure that maintains a $(1-\eps)$-approximate maximum weight matching under (1) one-sided vertex deletions (with incident edges) and (2) one-sided vertex insertions (with incident edges sorted by weight) to the other side.
The total time used is $O(m\eps^{-1})$, where $m$ is the sum of the number of initially existing and inserted edges.\footnote{An earlier version of this paper appeared in the 24th International Conference on Integer Programming and Combinatorial Optimization (IPCO 2023) with a slightly slower algorithm running in $O(m\eps^{-1}\log \eps^{-1})$ time.}
\end{abstract}

\section{Introduction}

Let $G = (U \cup V, E)$ be an edge-weighted bipartite graph with $n = |U\cup V|$ vertices and $m = |E|$ edges where each edge $uv \in E$ with $u\in U$ and $v\in V$ has a non-negative weight $w(uv)$.
The \emph{maximum weight matching} (MWM) problem asks for a matching $M \subseteq E$ that attains the largest possible weight $w(M) = \sum_{uv\in M} w(uv)$.
This paper will focus on approximate solutions to the MWM problem. More specifically, if we let $M^*$ denote a maximum weight matching of $G$, our goal is to find a matching $M$ such that $w(M) \ge (1-\eps) w(M^*)$ for any small constant $\eps >0$.

Matchings are a very well studied problem in combinatorial optimization.
Kuhn \cite{Kuhn1955} in 1955 published a paper that started algorithmic work in matchings, and presented what he called the ``Hungarian algorithm'' which he attributed the work to K\H{o}nig and Egerv\'{a}ry. 
Munkres \cite{Munkres1957} showed that this algorithm runs in $O(n^4)$ time.
The running time for computing the exact MWM has been improved many times since then.
Recently, Chen et al. \cite{ChenKLPPS2022} showed that it was possible to solve the more general problem of max flow in $O(m^{1+o(1)})$ time.

For $(1-\eps)$-approximation algorithms for MWM in bipartite graphs, 
Gabow and Tarjan in 1989 showed an $O(m\sqrt{n}\log(n/\eps))$ algorithm.
Since then there were a number of results for different running times and different approximation ratios.
The prior best approximate algorithm is by Duan and Pettie \cite{DuanP14} which computes a $(1-\eps)$-approximate maximum weight matching in $O(m\eps^{-1}\log(\eps^{-1}))$ time with a scaling algorithm.
We defer to their work for a more thorough survey of the history on the MWM problem.

We show in our work that the auction algorithm for matchings using multiplicative weights can give a $(1-\eps)$-approximate maximum weight matching with a running time of $O(m\eps^{-1})$ for bipartite graphs. This is a modest improvement of a $\log \eps^{-1}$ factor over the prior algorithm of Duan and Pettie \cite{DuanP14} which works in general graphs. 
However, in comparison to their rather involved algorithm, our algorithm is simple and only uses elementary data structures.
Furthermore, we are able to use properties of the algorithm to support two dynamic operations, namely one where vertices are deleted from one side and one where vertices of the other side of the bipartite graph are inserted together with their incident edges.
No algorithm that allows both these operations with running time faster than recomputation from scratch was known prior.

\subsection{Dynamic matching algorithms.}
\myparagraph{Dynamic weighted matching.} There has been a large body of work on dynamic matching and many variants of the problem have been studied, e.g, the maximum, maximal, as well as $\alpha$-approximate setting for a variety of values of $\alpha$, both in the weighted as well as in the unweighted setting. See~\cite{DBLP:conf/sand/HanauerH022} for a survey of the current state of the art for the fully dynamic setting.
For any constant $\delta >0$ there is a conditional lower bound based on the OMv conjecture that shows that
any dynamic algorithm that returns the \emph{exact} value of a maximum cardinality matching in a bipartite graph with polynomial preprocessing time
cannot take time $O(m^{1-\delta})$ per query and $O(m^{1/2 - \delta})$ per edge update operation~\cite{henzinger2015unifying}.

\myparagraph{Dynamic $(1-\eps)$-approximate matchings}
For \emph{general weighted} graphs Gupta and Peng \cite{GuptaP13} gave the first algorithm in the \emph{fully dynamic} setting with edge insertions and deletions to maintain a $(1-\eps)$-approximate matching in $O(\eps^{-1}\sqrt{m} \log w_{max})$ time, where the edges fall into the range $[1,w_{max}]$.
There are also some results for bipartite graphs in \emph{partially dynamic} settings. In the \emph{incremental} setting, edges are only inserted, and \emph{decremental} setting, edges are only deleted.
For unweighted bipartite graphs, the fastest known decremental algorithm is by Bernstein, Probst Gutenberg, and Saranurak~\cite{BernsteinGS20} achieves update times of $O(\eps^{-4}\log^3(n))$ per edge deletion. For incremental algorithms Blikstad and Kiss~\cite{BlikstadK23} achieve update times of $O(\eps^{-6})$ time per edge insertion.
These results can be made to work in weighted graphs by a meta theorem of Bernstein, Dudeja, and Langley~\cite{BernsteinDL21}. Their theorem states that any dynamic algorithm on an unweighted bipartite graph can be transformed into a dynamic algorithm on weighted bipartite graph at the expense of an extra  $(1/\eps)^{O(1/\eps)}\log n$ factor.

\myparagraph{Vertex updates.}
By vertex update we refer to updates that are vertex insertion (resp.~deletion) that also inserts (resp.~deletes) all edges incident to the vertex.
There is no prior work on maintaining matchings in weighted graphs under vertex updates. 
However, vertex updates in the \emph{unweighted bipartite} setting has been studied.
Bosek et al. \cite{BosekLSZ14} gave an algorithm that maintains the $(1-\eps)$-approximate matching when vertices of one side are  deleted  in $O(\eps^{-1})$ amortized time per changed edge. The algorithm can be adjusted to the setting where vertices of one side are inserted in the same running time, but it cannot handle both vertex insertions and deletions.
Le et al. \cite{LeMSW22} gave an algorithm for maintaining a \emph{maximal} matching under vertex updates in constant amortized time per changed edge. They also presented an $e/(e-1) \approx 1.58$ approximate algorithm for maximum matchings in an unweighted graph when vertex updates are only allowed on one side of a bipartite graph.

We give the first algorithm to maintain a $(1-\eps)$-approximate maximum weight matching where vertices can undergo vertex deletions  on one side \emph{and} vertex insertions on the other side 
in total time $O(m\eps^{-1})$, where $m$ is the sum of the number of initially existing and inserted edges. It assumes that the edges incident to an inserted vertex are given in sorted order by weight, otherwise, the running time increases by $O(\log n)$ per inserted edge.

\subsection{Linear Program for MWM}
The MWM problem can be expressed as the following \emph{linear program} (LP) where the variable $x_{uv}$ denotes whether the edge $uv$ is in the matching. It is well known \cite{Schrijver03} that the below LP is integral, that is the optimal solution has all variables $x_{uv}\in \{0, 1\}$.
\begin{align*}
    \max \quad & \sum_{uv\in E} w(uv) x_{uv} &\\
    s.t. \quad &  \sum_{v \in N(u)} x_{uv} \le 1 &\forall u\in U \\
         &  \sum_{u \in N(v)} x_{uv} \le 1 &\forall v\in V \\
         &  x_{uv} \ge 0                      &\forall uv \in E 
\end{align*}

We can also consider the dual problem of weighted vertex cover that aims to find dual weights $y_u$ and $y_v$ for every vertex $u\in U$ and $v\in V$ respectively. 
\begin{align*}
    \min \quad & \sum_{u\in U} y_u + \sum_{v\in V} y_v &\\
    s.t.\quad &  y_u + y_v \ge w(uv)            &\forall uv \in E \\
         &  y_u \ge 0                       &\forall u \in U \\
         &  y_v \ge 0                       &\forall v \in V 
\end{align*}

\subsection{Multiplicative weight updates for packing LPs}
Packing LPs are LPs of the form 
$\max\{c^{T}x \mid Ax \le b\}$
for $c\in \R^{n}_{\ge 0}$, $b\in \R^{m}_{\ge 0}$ and 
$A\in \R^{n\times m}_{\ge 0}$.
The LP for MWM is a classical example of a packing LP.
The \emph{multiplicative weight update method} (MWU) has been investigated extensively to provide faster algorithms for finding approximate solutions\footnote{
By \emph{approximate solution} we mean a possibly fractional assignments of variables that obtains an approximately good LP objective. 
If we find such an approximate solution to MWM, fractional solutions need to be rounded to obtain an actual matching. } 
to packing LPs \cite{Young14, KoufogiannakisY14, ChekuriQ18, Allen-ZhuO19, WangRM16, Quanrud20}. 
Typically the running times for solving these LPs have a dependence on $\eps$ of $\eps^{-2}$,
e.g. the algorithm of Koufogiannakis and Young \cite{KoufogiannakisY14} would obtain a running time of $O(m\eps^{-2}\log n)$ when applied to the matching LP. 

The fastest multiplicative weight update algorithm for solving packing LPs by Allen-Zhu and Orecchia \cite{Allen-ZhuO19} would obtain an $O(m\eps^{-1}\log n)$ running time for MWM. 
Very recently, work by Battacharya, Kiss, and Saranurak \cite{BhattacharyaKS22} extended the MWU for packing LPs to the \emph{partially dynamic setting}. When restricted to the MWM problem means the weight of edges either only increase or only decrease. 
Using similar ideas with MWUs, Assadi~\cite{Assadi23} recently derived a simple semi-streaming algorithm for bipartite matchings.
However as packing LPs are more general than MWM, these algorithms are significantly more complicated and are slower by $\log n$ factors (and sometimes worse dependence on $\eps$ e.g. in \cite{BhattacharyaKS22}) when compared to our algorithms.

We remark that our algorithm, while it uses multiplicative weight updates, is unlike typical MWU algorithms as it has an additional monotonicity property.
We only increase dual variables on one side of the matching. 

\subsection{Auction Algorithms}
Auction algorithms are a class of primal dual algorithms for solving the MWM problem that view $U$ as a set of \emph{goods} to be sold, $V$ as a set of \emph{buyers}. The goal of the auction algorithm is to find a welfare-maximizing allocation of goods to buyers. 
The algorithm is attributed to Bertsekas \cite{Bertsekas81}, as well as to Demange, Gale, and Sotomayor \cite{DemangeGG86}.

An auction algorithm initializes the prices of all the goods $u\in U$ with a price $y_u = 0$ (our choice of $y_u$ is intentional, as prices correspond directly to dual variables), and has buyers initially \emph{unallocated}. 
For each buyer $v\in V$, the \emph{utility} of that buyer upon being allocated $u\in U$ is $\util(uv) = w(uv) - y_u$.
The auction algorithm proceeds by asking an unallocated buyer $v \in V$ for the good they desire that maximizes their utility, i.e. for $u_v = \arg\max_{u\in N(v)} \util(uv)$.
If $\util(u_v v) < 0$, the buyer remains unallocated.
Otherwise the algorithm allocates $u_v$ to $v$, then increases the price $y_u$ to $y_u + \eps$.
The algorithm terminates when all buyers are either allocated or for every unallocated buyer $v$, it holds that $\util(u_v v) < 0$. 
If the maximum weight among all the edges is $w_{max}$, then the auction algorithm terminates after $O(n\eps^{-1}w_{max})$ rounds and outputs a matching that differs from the optimal by an additive factor of at most $n\eps$.

There have been a recent resurgence in interest in auction algorithms. Assadi, Liu, and Tarjan~\cite{AssadiLT21} used the auction algorithm for matchings in unweighted graphs in semi-streaming and massively parallel computing (MPC) settings. This work was generalized for weighted bipartite graphs in the same settings by Liu, Ke, and Kholler~\cite{LiuKK23}.

\subsection{Our contribution}
We present the following modification of the auction algorithm:
\begin{center}
\textit{
When $v$ is allocated $u$, increase $y_u$ to $y_u + \eps \cdot w(uv)$ instead of $y_u + \eps$.}
\end{center}

Note that this decreases $\util(v)$ by at least a factor of $(1-\eps)$ as well as increases $y_u$ by at least a factor of $(1+\eps)$.
Thus we will call algorithms with this modification \emph{multiplicative auction algorithms}.
Surprisingly, we were not able to find any literature on this simple modification.
Changing the constant additive weight update to a multiplicative weight update
has the effect of taking much larger steps when the weights are large, and so we are able to show that the algorithm can have no dependence on the size of the weights. 
In fact, we are able to improve the running time to $O(m\eps^{-1})$, faster than the prior approximate matching algorithm of Duan and Pettie \cite{DuanP14} that ran in $O(m\eps^{-1}\log \eps^{-1})$.
While the algorithm of~\cite{DuanP14} has the advantage that it works for general graphs and ours is limited to bipartite graphs,
our algorithm is simpler as it avoids the scaling algorithm framework and is easier to implement.
\begin{restatable}{theorem}{staticmatching} \label{thm:StaticMatching}
Let $G=(U\cup V, E)$ be a weighted biparitite graph and $\eps$ be a value such that $1>\eps>0$. There is a multiplicative auction algorithm running in time $O(m\eps^{-1})$ that finds a $(1-\eps)$-approximate maximum weight matching of $G$.
\end{restatable}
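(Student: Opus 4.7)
The plan is to treat the multiplicative auction as a primal--dual procedure for the matching LP stated earlier. The prices $y_u$ will serve as the $U$-side duals, and for each buyer $v$ I define a companion dual: if $v$ is currently matched to $u_v$, set $y_v := w(u_v v) - y_{u_v}$ (the buyer's current utility); otherwise set $y_v := 0$. The algorithm halts once every unmatched buyer $v$ has $\max_{u \in N(v)} (w(uv) - y_u) < \delta_v$ for a threshold $\delta_v$ tuned to be a small multiple of $\eps \cdot \max_{u \in N(v)} w(uv)$, balancing approximation loss against total work.

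For the $(1-\eps)$-approximation guarantee, I will establish approximate dual feasibility $y_u + y_v \ge (1-\eps)\, w(uv)$ by case analysis on an edge $uv$. If $uv \in M$, equality $y_u + y_v = w(uv)$ holds by construction. If $v$ is matched in $M$ to some $u' \ne u$, then at $v$'s last bid it chose $u'$ over $u$, so the bid-time inequality $util(u'v) \ge util(uv)$, combined with the multiplicative update (which yields $y_v = (1-\eps)\cdot util(u'v)^{\text{bid time}}$) and the monotonicity of $y_u$ between the bid and termination, gives $y_u + y_v \ge (1-\eps)\, w(uv)$. If $v$ is unmatched, the termination condition gives $y_u > w(uv) - \delta_v$, contributing an additive error $\delta_v$ that can be charged to the dual of $v$'s maximum-weight neighbor, whose price is forced near $w_v^* := \max_{u \in N(v)} w(uv)$ by the termination condition. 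Combining with the identity $\sum_u y_u + \sum_v y_v = w(M)$ (unmatched $u$ have $y_u = 0$ since they were never bid on) and weak LP duality summed over $M^*$, together with a light preprocessing step that discards edges of negligibly small weight, I obtain $w(M) \ge (1 - O(\eps))\, w(M^*)$, which gives the claimed bound after rescaling $\eps$.

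For the running time, each bid by $v$ on $u$ multiplies $util(uv) = w(uv) - y_u$ by exactly $(1-\eps)$, and this utility is non-increasing between $v$'s bids since $y_u$ only grows. Since every bid requires $util(uv) \ge \delta_v$, the number of bids on any edge $uv$ by $v$ is at most $O(\eps^{-1} \log(w(uv)/\delta_v)) = O(\eps^{-1} \log(\eps^{-1}))$ with the chosen threshold, yielding $O(m \eps^{-1} \log(\eps^{-1}))$ total bids.

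The main obstacle will be implementing each bid in amortized $O(1)$ time, as required to match the overall running-time bound. Naively computing $\arg\max_{u \in N(v)} util(uv)$ costs $\Theta(\deg(v))$ per bid, and maintaining a per-buyer priority queue under global changes to $y_u$ is too expensive since a single bid on $u$ changes the utility of every edge incident to $u$. The plan is to sort each buyer's incident edges by weight once at initialization and maintain a forward-only pointer into this sorted list, using the monotonicity of $y_u$ to permanently skip past edges whose utility has fallen below $\delta_v$. The subtle point is to verify that the approximate maximizer returned by such a scan is within a $(1-\eps)$ factor of the true maximizer, so that the dual-feasibility argument continues to go through after a constant-factor adjustment of $\eps$.
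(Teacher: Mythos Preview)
Your primal--dual framework is the paper's: take the algorithm's prices as the $U$-duals, set $y_v$ to the current utility of $v$'s matched edge (or $0$), verify approximate dual feasibility edge by edge, and invoke the approximate complementary slackness lemma (Lemma~\ref{lem:approxslack}). That skeleton is correct.

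The substantive gap is in the implementation you propose for the running-time bound. Sorting $N(v)$ once by \emph{weight} and advancing a single forward pointer cannot give an approximate $\arg\max$ over \emph{utilities} in amortized $O(1)$. The order in which utilities fall is not controlled by weight: other buyers raise $y_u$ independently of $w(uv)$, so a heavy edge can have small utility early (because $u$ is contested) while a lighter edge still has large utility, and later the roles can reverse. Hence you can neither advance past the heavy edge permanently (it may again be the approximate maximizer) nor afford to revisit it repeatedly. You flag this as ``the subtle point,'' but it is actually where the algorithmic idea lives.

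The paper's device is different and is precisely what you are missing. It never computes an $\arg\max$. Instead, for each edge $uv$ it inserts into $Q_v$ not one but $O(\eps^{-1}\log(\eps^{-1}))$ copies $(i,uv)$, one for each geometric utility level $(1+\eps)^i$ with $j_{uv}-k_{\min}\le i\le j_{uv}$, and merges all copies across $N(v)$ into a single list sorted by $i$. Scanning $Q_v$ in decreasing $i$ maintains the invariant that every neighbor's utility is below $(1+\eps)^{j_v+1}$ (Invariant~\ref{inv1}); therefore the first entry with $util(uv)\ge(1+\eps)^{j_v}$ is automatically within a $(1+\eps)$ factor of the true maximum, which is what drives Invariant~\ref{inv2}. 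Each copy is consumed once, so the total work is $O(m\,\eps^{-1}\log(\eps^{-1}))$ with no per-bid search at all.

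A secondary issue: your per-buyer stopping threshold $\delta_v=\eps\cdot w_v^*$ gives the right bid count, but the charging step (``charge $\delta_v$ to the dual of $v$'s maximum-weight neighbor'') is incomplete because arbitrarily many unmatched buyers may share the same heaviest neighbor, so the charges do not sum to $O(\eps)\,w(M)$. The paper sidesteps this by using a per-\emph{edge} lower threshold: the bottom copy of $uv$ sits at level $j_{uv}-k_{\min}$, so at termination $util(uv)\le \eps\,w(uv)$ for every edge out of an unmatched $v$, which plugs directly into the feasibility inequality $y_u+y_v\ge(1-\eps)w(uv)$ with no charging.
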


Furthermore, it is straightforward to extend our algorithm to a setting where \emph{vertices on one side are deleted} and \emph{vertices on the other side are added with all incident edges given in sorted order of weight}. When the inserted edges are not sorted by weight, the running time per inserted edge increases by an additive term of $O(\log \log n)$ to sort the log of the weights of incident inserted edges.
\begin{restatable}{theorem}{decmatching}\label{thm:DecMatching}
Let $G=(U\cup V, E)$ be a weighted bipartite graph.
There exists a dynamic data structure that  maintains a $(1-\eps)$-approximate maximum weight matching of $G$  and supports any arbitrary sequence of the following operations
\begin{enumerate}
\item[(1)] Deleting a vertex in $U$ 
\item[(2)] Adding a new vertex into $V$ with all its incident edges sorted by weight
\end{enumerate}
in total time $O(m\eps^{-1})$, where $m$ is sum of the number of initially existing and inserted edges.
\end{restatable}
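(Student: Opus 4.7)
The plan is to lift the static multiplicative auction of Theorem~\ref{thm:StaticMatching} into an event-driven online procedure, exploiting the key property that the dual prices $y_u$ are monotonically non-decreasing throughout the entire update sequence. The data structure maintains the current prices, the current allocation, and a queue $Q$ of unallocated buyers in $V$. A subroutine \textsc{DrainQueue}, essentially the main loop of the static algorithm, repeatedly pops a buyer $v \in Q$, finds its best good $u = \arg\max_{u' \in N(v)}\{w(u'v) - y_{u'}\}$, and either discards $v$ (if the maximum utility is non-positive) or allocates $u$ to $v$, pushing the prior owner of $u$ (if any) back onto $Q$ and updating $y_u \leftarrow (1-\eps) y_u + \eps w(uv)$.

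Each dynamic update reduces to populating $Q$ and then invoking \textsc{DrainQueue}. A deletion of $u \in U$ pushes the current owner of $u$ (if any) onto $Q$ and removes $u$ from all adjacency structures. An insertion of $v \in V$ with a sorted-by-weight adjacency list simply pushes $v$ onto $Q$. Correctness follows exactly as in the static setting: whenever \textsc{DrainQueue} returns, the same primal-dual invariants hold for the \emph{current} graph, so the maintained matching is $(1-\eps)$-approximate. Dual feasibility is preserved trivially under deletions, since constraints are only removed; insertions invoke precisely the same allocation step as the static algorithm.

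The running-time bound reuses the amortization of Theorem~\ref{thm:StaticMatching}. Each good $u$ can undergo at most $O(\eps^{-1}\log\eps^{-1})$ price updates before its remaining slack $w_{\max}(u) - y_u$ drops past the discard threshold, because each update multiplies this slack by $(1-\eps)$. The sorted-edge hypothesis on inserted vertices is used to implement the per-buyer scan with the same amortized cost as in the static algorithm: walk $N(v)$ in decreasing weight and early-terminate as soon as the next weight falls below the running best utility $U^\ast$, since for any later edge $u'v$ we have $w(u'v) - y_{u'} \le w(u'v) \le U^\ast$ by non-negativity of prices. Combining per-good price-update budgets with this scan cost, the total work across all operations is $O(m\eps^{-1}\log\eps^{-1})$, where $m$ counts initial plus inserted edges.

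The main obstacle is checking that the cascades of displacements induced by deletions slot cleanly into the static amortization. When a deletion of $u$ displaces its owner $v$, and \textsc{DrainQueue} reallocates $v$ to some $u'$ displacing another buyer $v'$, and so on, every step of the chain except a final discard triggers a genuine multiplicative price jump at some good. Hence the global number of such allocation events, over the entire interleaved sequence of insertions and deletions, is still bounded by $\sum_{u \in U} O(\eps^{-1}\log\eps^{-1})$, which is exactly what the static charging scheme can absorb; mixing the two update types in arbitrary order therefore incurs no additional overhead, yielding the claimed bound.
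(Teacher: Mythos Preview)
Your high-level plan (keep the static auction running, push displaced or newly inserted buyers onto a work queue, and rely on monotonicity of the prices $y_u$) is the same as the paper's. The gap is entirely in the running-time argument, and it is a real one.

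First, with the discard rule you state (drop $v$ once its maximum utility is non-positive), the multiplicative update never terminates: take one good $u$ and two buyers $v,v'$ each with a single edge to $u$ of weight $W$; they displace each other forever, since after $t$ updates $y_u = W(1-(1-\eps)^t) < W$ and both utilities stay strictly positive. You need a threshold like $\eps\cdot w(uv)$, not $0$. Second, your amortization claim that ``each good $u$ can undergo at most $O(\eps^{-1}\log\eps^{-1})$ price updates because each update multiplies the slack $w_{\max}(u)-y_u$ by $(1-\eps)$'' is false: an update via edge $uv$ multiplies $w(uv)-y_u$ by $(1-\eps)$, not $w_{\max}(u)-y_u$. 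If $u$ has pairs of buyers at geometrically growing weight scales $1,2,4,\dots,2^{L}$, then each scale contributes $\Theta(\eps^{-1}\log\eps^{-1})$ updates to $u$ before that scale's buyers fall below their own threshold, giving $\Theta(L\,\eps^{-1}\log\eps^{-1})$ total updates at a single good; so $\sum_{u\in U} O(\eps^{-1}\log\eps^{-1})$ does not bound the number of allocation events. Third, even granting a correct bound on allocation events, you have not amortized the cost of your ``sorted scan with early termination'' across repeated displacements of the same buyer $v$: each re-scan starts from the top of $N(v)$, and you give no argument bounding the sum of scan lengths.

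The paper sidesteps all three issues with one device you omit: for each $v$ it builds a list $Q_v$ containing, for every edge $uv$, the $O(\eps^{-1}\log\eps^{-1})$ level pairs $(i,uv)$ with $j_{uv}-k_{\min}\le i\le j_{uv}$, sorted by $i$. The subroutine \textsc{MatchR}$(v)$ simply pops the front of $Q_v$ and tests $util(uv)\ge(1+\eps)^i$; crucially, a popped pair is never reinserted, so the total work over all calls is trivially $\sum_v |Q_v| = O(m\,\eps^{-1}\log\eps^{-1})$. The sorted-edge hypothesis on an inserted vertex is used not for any early-termination trick but simply to construct $Q_v$ in $O(|N(v)|\cdot k_{\min})$ time without a fresh sort. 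Deletion of $u$ just marks $u$, removes $uv$ from $M$, and calls \textsc{MatchR}$(v)$; insertion builds $Q_v$ and calls \textsc{MatchR}$(v)$; the invariants (hence the $(1-\eps)$ guarantee) are preserved, and no separate amortization lemma is needed.
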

\section{The static algorithm}

\subsection{A slower algorithm}\label{subsec:slower}
For sake of exposition we will first present a slower algorithm that runs in near-linear time in the number of edges that will use the following update rule:
\begin{center}
\textit{
When $u$ is allocated to $v$, increase $y_u$ to $y_u + \eps \cdot \util(uv)$ }
\end{center}

We assume that the algorithm is given as input some fixed $0<\eps'<1$, and the goal is to find a $(1-\eps')$-approximate MWM.
We will also assume that $m= \Omega(n)$, as a graph with $m$ edges has at most $2m$ vertices that have at least one incident edge. If $2m < n$, then we may discard the isolated vertices and reduce $n$.

\myparagraph{Notation}
For sake of notation let $N(u) = \{v\in V \mid uv \in E\}$ be the set of neighbors of $u\in U$ in $G$, and similarly for $N(v)$ for $v\in V$. 

\myparagraph{Preprocessing of the weights.}
Let $w_{max} > 0$ be the maximum weight edge of $E$. For our static auction algorithm we may ignore any edge $uv \in E$ of weight less than $\eps'\cdot  w_{max}/n$ as $w(M^*) \ge w_{max}$ as taking $n$ of these small weight edges would not even contribute $\eps'\cdot w(M^*)$ to the matching. 
Thus, we only consider edges of weight at least $\eps'\cdot w_{max}/n$, which allows us to
rescale all edge weights by dividing them by $\eps'\cdot w_{max}/n$. As a result we can assume (by slight abuse of notation) in the following that the minimum edge weight is $1$ and the largest edge weight $w_{max}$ equals $ n/\eps'$.
Furthermore, since we only care about approximations, we will also round down all edge weights to the nearest power of $(1+\eps)$ for some $\eps < \eps'/2$.
We define $\ilog(x) = \lfloor \log_{1+\eps} (x) \rfloor$, and we will only care about weights after applying this operation.

Let $k_{max} = \ilog(w_{max}) = \ilog(n/\eps') = O(\eps^{-1} \log(n/\eps))$.
Let $k_{min}$ be the smallest integer such that
$(1+\eps)^{-k_{min}} \le \eps$.
Observe that as $\log (1 +\eps) \le \eps$ for $0 \le \eps \le 1$ it holds that
\[k_{min} \ge \frac{\log(\eps^{-1})}{\log (1+\eps)} \ge \eps^{-1} \log (\eps^{-1}). \]

Thus we see that $k_{min} = \Theta(\eps^{-1}\log(\eps^{-1}))$.

\myparagraph{Algorithm.}
The algorithm first builds for every $v \in V$ a list $Q_v$ of pairs $(i,uv)$ for each edge $uv$ and each value $i$ with $-k_{min} \le i \le j_{uv} = \ilog(w_{uv})$ and then sorts $Q_v$ by decreasing value of $i$. After, it calls the function 
$\textsc{MatchR}(v)$ on every $v \in V$.
The function
$\textsc{MatchR}(v)$ matches $v$ to the item that maximizes its utility and updates the price $y_u$ according to our multiplicative update rule. While matching $v$, another vertex $v'$ originally matched to $v$ may become unmatched. If this happens, $\textsc{MatchR}(v')$ is called immediately after $\textsc{MatchR}(v)$. 

\begin{algbox}{\textsc{MultiplicativeAuction}$(G = (U\cup V, E))$} \label{alg:MultiplicativeAuction}
$M \leftarrow \emptyset$. \\
$y_u \leftarrow 0$ for all $u \in U$.\\
$y_v \leftarrow 0$ for all $v \in V$ \hfill {\color{gray} \# This is only used in the analysis}
\\
$j_v \leftarrow k_{max}$ for all $v \in V$ \hfill {\color{gray} \# This is only used in the analysis} \\
$Q_v \leftarrow \emptyset$ for all $v \in V$.  \\
For $v\in V$:
\begin{enumerate}
    \item For $u \in N(v)$:
    \begin{enumerate}
        \item $j_{uv} \leftarrow \ilog(w(uv))$
        \item For $i$ from $j_{uv}$ to $-k_{min}$:
        \begin{enumerate}
            \item Insert the pair $(i, uv)$ into $Q_v$.
        \end{enumerate}
    \end{enumerate}
    \item Sort all pairs $(i, uv) \in Q_v$ so the pairs are in non-increasing  order of $i$.
\end{enumerate}
For $v\in V$:
\begin{enumerate}
    \item[3.] \textsc{MatchR}$(v)$.
\end{enumerate}
Return $M$.
\end{algbox}

\begin{funcbox}{\sc MatchR($v$)}
 While $Q_v$ is not empty:
\begin{enumerate}
    \item $(j, uv) \leftarrow$ the first element of $Q_v$
    \item $j_v \leftarrow j$ \hfill {\color{gray} \# This is only used in the analysis}
    \item $\util(uv) \leftarrow w(uv) - y_u$
    \item If $\util(uv) < (1+\eps)^{j}$:
    \begin{itemize}
        \item Remove $(j,uv)$ from $Q_v$.
    \end{itemize}
    \item Else: 
    \begin{enumerate}
        \item $y_v \leftarrow \util(uv)$ \hfill {\color{gray}\# This is only used in the analysis}
        \item $y_u \leftarrow y_u + \eps \cdot (\util(uv))$ \hfill {\color{red}\# Update rule}
        \item If $u$ was matched to $v'$ so that $uv'\in M$:
        \begin{itemize}
            \item Remove $(u, v')$ from $M$
                \item $y_{v'} \leftarrow 0$ \hfill {\color{gray}\# This is only used in the analysis}
            \item Add $(u, v)$ to $M$
            \item \textsc{MatchR}$(v')$
            \item Return
        \end{itemize}
        \item Else:
        \begin{itemize}
            \item Add $(u,v)$ to $M$
            \item Return
        \end{itemize}
    \end{enumerate}
\end{enumerate}
\end{funcbox}

\myparagraph{Data structure.}
We store for each vertex $v \in V$ the list $Q_v$ as well as its currently matched edge if it exists. In the pseudocode we keep for each vertex $v$ a value $j_v$ corresponding to the highest weight threshold $(1+\eps)^{j_v}$ that we will consider.
We also keep a value $y_v$ which corresponds to the utility $v$ receives \emph{before} we update the price $y_u$ when $v$ is matched to $u$. Note that $j_v$ and $y_v$ are only needed in the analysis.

\myparagraph{Running time.}
The creation and sorting of the lists $Q_v$ takes time $O(|N(v)|(k_{max} + k_{min}))$ if we
use bucket sort as there are only $k_{max} + k_{min}$ distinct weights. The running time of all
calls to $\textsc{MatchR}(v)$ is dominated by the size of $Q_v$, as each iteration in $\textsc{MatchR}(v)$ removes an element of $Q_v$ and takes $O(1)$ time. Thus, the total time is $O\left(\sum_{v\in V} |N(v)|(k_{max} + k_{min})\right) = O(m(k_{max} + k_{min})) = O(m\eps^{-1} \log(n/\eps)).$

\myparagraph{Invariants maintained by the algorithm.}
The algorithm maintains five different invariants.
\begin{invariant}\label{inv1}
For all $v\in V$, and all $u\in N(v)$, $\util(uv)= w(uv) - y_u \le (1+\eps)^{j_v+1}$.
\end{invariant}
\begin{proof}
This clearly is true at the beginning, since $j_v$ is initialized to $k_{max}$, and \[\util(uv) = w(uv) < (1+\eps)^{j_{uv}+1}.\]
As the algorithm proceeds, $\util(uv)$ which equals $w(uv) - y_u$ only decreases as $y_u$ only increases. Thus, we only have to make sure that the condition holds whenever $j_v$ decreases.
The value $j_v$ only decreases from some value, say $j+1$, to a new value $j$, in \textsc{MatchR}$(v)$ and when this happens $Q_v$ does not contain any pairs $(j',uv)$ with $j' >j$ anymore. Thus, there does not exist a neighbor $u$ of $v$ with $\util(uv) \ge (1+\eps)^{j+1}$. It follows that
when $j_v$ decreases to $j$ for all $u \in N(v)$ it holds that $\util(uv) < (1+\eps)^{j_v+1}$.
\end{proof}

\begin{invariant} \label{inv2}
For all $u\in U$ $y_u\ge 0$ and $y_u$ never decreases over the course of the algorithm. Furthermore, if $u\in U$ is not matched, then $y_u = 0$.
\end{invariant}

\begin{proof}
We initialize $y_u$ to $0$. If $u$ is never matched, we never change $y_u$, so it stays $0$. Throughout the algorithm, we only ever increase $y_u$.
\end{proof}

\begin{invariant} \label{inv3}
For all $v\in V$ for which \textsc{MatchR}$(v)$ was called at least once,
if $v$ is unmatched, then $y_v=0$ and $Q_v$ is empty. 
Furthermore, for all $u\in N(v)$ we have that $y_u + y_v = y_u > (1-\eps) \cdot w(uv)$.
\end{invariant}
\begin{proof}
\textsc{MatchR}$(v)$ terminates (i) after it matches $v$ and recurses or (ii) if $Q_v$ is empty. Initially $v$ is unmatched and $y_v$ is set to 0. If $v$ is matched, it is possible that for some $v'\in V$, $v' \ne v$, that $v$ becomes temporarily unmatched during \textsc{MatchR}$(v')$ and $y_v$ is set to 0, but \textsc{MatchR}$(v)$ will be immediately called again.
Thus, whenever $v$ is unmatched, $y_v = 0$.

Hence, if the last call to \textsc{MatchR}$(v)$ does not result in $v$ being matched, then
this means that $Q_v$ must be empty and $y_v = 0$.
Since $Q_v$ is empty, 
then for all $u\in N(v)$,
we must have $\util(uv) < (1+\eps)^{-k_{min}} \le \eps$.
Since we rescaled weights so that $w(uv) \ge 1$, we know that $\util(uv) < \eps \le \eps \cdot w(uv)$. 
Note that $\util(uv) = w(uv) -y_u^*$ where $y_u^*$ denotes the value of $y_u$ before $u$ was matched, and $y_u \ge y_u^*$. Thus,
\[ y_u \ge y_u^* = w(uv) - \util(uv) > (1-\eps) \cdot w(uv). \]
\end{proof}

\begin{invariant} \label{inv4}
If $v\in V$ is matched to $u\in U$, then for all $u' \in N(v)$, $y_{u'} + y_{v} \ge (1-\eps)\cdot w(u'v)$ for as long as $v$ stays matched.
\end{invariant}
\begin{proof}
Note that $y_v$ doesn't change as long as $v$ stays matched, and for all $u'\in N(v)$, $y_{u'}$ can only increase by Invariant~\ref{inv2}, so it suffices to prove $y_{u'} + y_{v} \ge (1-\eps)\cdot w(u'v)$ right after $v$ was matched to $u$. 

Let $y_u^*$ be the value of $y_u$ right before $v$ was matched to $u$. 
Note that $y_v = w(uv) - y_u^*$.
For $u' = u$, we know that $y_v + y_u^* = w(uv)$, and, by Invariant~\ref{inv2}, $y_{u}^* \le y_u$ so $y_v + y_u \ge w(uv)$.

For all other $u' \in U$, 
right before we updated $y_u$, we had that
$(1+\eps)^{j_v}\le \util(uv)$ and, by Invariant~\ref{inv1}, $\util(u'v) \le (1+\eps)^{j_v+1}$.
Thus, $(1+\eps) \cdot \util(uv) \ge \util(u'v)$, so that $\util(uv) = w(uv) - y^*_u = y_v \ge (1+\eps)^{-1} \cdot (w(u'v) - y_{u'})$ .
As $y_{u'}\ge 0$ by Invariant~\ref{inv2} and  $1 > \eps > 0$, it follows that:
\begin{align*}
y_{v} + y_{u'} \ge (1+\eps)^{-1}  \cdot w(u'v) + (1 - (1+\eps)^{-1}) \cdot y_{u'} 
               \ge (1-\eps) \cdot w(u'v),
\end{align*}
\end{proof}
\begin{invariant}\label{inv5}
If $v\in V$ is matched to $u\in U$, then $y_u + y_v \le (1+\eps)\cdot w(uv)$ for as long as $v$ remains matched to $u$.
\end{invariant}
\begin{proof}
Note that $y_u$ and $y_v$ don't change as long as $v$ remains matched to $u$.
Let $y_u^*$ denote the value of $y_u$ right before the update rule of line 5(b) in \textsc{MatchR}.
Then observe $y_u = y_u^* + \eps\cdot (w(uv) - y_u^*)$, and $y_v = w(uv) - y_u^*$. Thus,
\begin{align*}
    y_u + y_v   &= y_u^* + \eps\cdot (w(uv) - y_u^*) + w(uv) - y_u^* 
    \\          &= (1+\eps) \cdot w(uv) - \eps \cdot y_u^* 
    \\          &\le (1+\eps) \cdot w(uv).
\end{align*}
\end{proof}

\myparagraph{Approximation factor.}
We will show the approximation factor of the matching $M$ found by the algorithm by primal dual analysis.
We remark that it is possible to show this result purely combinatorially as well.
We will show that this $M$ and a vector $y$ satisfy the complementary slackness condition up to a $1\pm \eps$ factor, which implies the approximation guarantee.
This was used by Duan and Pettie \cite{DuanP14} (the original lemma was for general matchings, we have specialized it here to bipartite matchings).

\begin{lemma}[Lemma 2.3 of \cite{DuanP14}] \label{lem:approxslack}
Let $M$ be a matching and let $y$ be an assignment of the dual variables.
Suppose $y$ is a complementary solution to $M$ in the following approximate sense: 
\begin{itemize}
    \item [(i)] For all $uv \in E, y_u + y_v \ge (1-\eps_0) \cdot w(uv)$,  
    \item [(ii)] For all $e\in M$, $y_u+y_v \le (1+\eps_1) \cdot w(uv)$,
    \item [(iii)] The $y$-values of all unmatched vertices are zero.
\end{itemize}
Then $M$ is a $\left((1+\eps_1)^{-1}(1-\eps_0)\right)$-approximate maximum weight matching.
\end{lemma}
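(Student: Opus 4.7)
The plan is to use weak LP duality together with the approximate complementary slackness condition provided by the hypothesis. The key observation is that even though $y$ is not quite dual-feasible, a slight rescaling makes it so.

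First I would define a rescaled dual assignment $\hat{y}_u = y_u/(1-\eps_0)$ and $\hat{y}_v = y_v/(1-\eps_0)$. The first hypothesis $y_u + y_v \ge (1-\eps_0) w(uv)$ for all $uv \in E$ immediately gives $\hat{y}_u + \hat{y}_v \ge w(uv)$, and non-negativity is preserved, so $\hat{y}$ is a feasible dual solution for the LP stated earlier in the paper. By weak LP duality applied to the integral primal solution $M^*$, we get
\[
w(M^*) \;\le\; \sum_{u\in U} \hat{y}_u + \sum_{v\in V} \hat{y}_v \;=\; \frac{1}{1-\eps_0}\left( \sum_{u\in U} y_u + \sum_{v\in V} y_v \right).
\]

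Next I would bound the sum of the $y$-values using the other two hypotheses. Since unmatched vertices contribute $0$, only matched vertices contribute to the sum; each matched edge $uv \in M$ contributes exactly $y_u + y_v$, and the upper complementary slackness hypothesis bounds this by $(1+\eps_1) w(uv)$. Summing across $M$ yields
\[
\sum_{u\in U} y_u + \sum_{v\in V} y_v \;=\; \sum_{uv\in M}(y_u+y_v) \;\le\; (1+\eps_1) \cdot w(M).
\]
Combining this with the weak duality bound immediately gives $w(M^*) \le \frac{1+\eps_1}{1-\eps_0} w(M)$, which rearranges to the claimed approximation factor.

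There is essentially no hard step here: the only subtlety is making sure the ``zero on unmatched vertices'' hypothesis is explicitly invoked so that the sum over vertices collapses to a sum over matched edges, and that one double-checks the rescaling preserves non-negativity (trivial since $\eps_0 < 1$). If one wanted to avoid quoting LP duality as a black box, one could alternatively write a direct charging argument: pair each edge $u^*v^* \in M^*$ with the endpoints' dual values and argue that $w(u^*v^*) \le (\hat{y}_{u^*} + \hat{y}_{v^*})$, then sum and use that each vertex appears in $M^*$ at most once — but invoking weak duality is cleaner and is exactly how Duan and Pettie phrase it.
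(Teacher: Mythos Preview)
The paper does not actually prove this lemma; it is quoted verbatim as Lemma~2.3 of Duan and Pettie~\cite{DuanP14} and used as a black box in the proof of Lemma~\ref{lem:algslack}. So there is no ``paper's own proof'' to compare against.

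That said, your argument is the standard one and is correct: rescale to a feasible dual, apply weak duality, and collapse the dual objective to a sum over matched edges using the zero-on-unmatched hypothesis. This is essentially how Duan and Pettie prove it as well. One small remark: you silently assume $y_u, y_v \ge 0$ when you say ``non-negativity is preserved'' and when you invoke weak duality for the rescaled $\hat y$; the lemma as stated here does not list $y \ge 0$ among its hypotheses, but it is implicit in ``valid solution to the LP'' (the dual LP has non-negativity constraints) and it certainly holds in the paper's application, where the $y_u$ are prices that only increase from~$0$ and the $y_v$ are non-negative utilities. If you want the write-up to be self-contained, state this assumption explicitly.
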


\begin{proof}
Let $M^*$ be the maximum weight matching. 
\begin{align*}
w(M) &= \sum_{uv\in M} w(uv) &&
\\   &\ge \sum_{uv\in M}(1+\eps_1)^{-1} \cdot(y_u+y_v) && \text{by (ii)}
\\   &= (1+\eps_1)^{-1} \left(\sum_{u\in U} y_u+\sum_{v\in V}y_v\right) && \text{by (iii)}
\\   &\ge (1+\eps_1)^{-1} \sum_{uv\in M^*} (y_u + y_v) && \text{as $M^*$ is a matching}
\\   &\ge (1+\eps_1)^{-1} \sum_{uv\in M^*} (1-\eps_0) \cdot w(uv) && \text{by (i)}
\\   &\ge (1-\eps_0)(1+\eps_1)^{-1} w(M^*) && 
\end{align*}
\end{proof}

This lemma along with our invariants is enough for us to prove the approximation factor of our algorithm.
\begin{lemma} \label{lem:algslack}
\textsc{MultiplicativeAuction}$(G = (U\cup V, E))$ outputs a $(1-\eps')$-approximate maximum weight matching of the bipartite graph $G$.
\end{lemma}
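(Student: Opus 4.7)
The plan is to invoke Lemma~\ref{lem:approxslack} with a dual assignment extracted from the final state of the algorithm. I would set $y_u$ to be the price maintained by the algorithm, and for each $v \in V$ define $y_v = w(uv) - y_u$ if $v$ is matched to $u$ in $M$, and $y_v = 0$ otherwise. Nonnegativity of $y_u$ and the property that unmatched $u$'s have $y_u = 0$ are immediate from Invariant~\ref{inv3}; nonnegativity of $y_v$ in the matched case holds because $uv$ is only added to $M$ when $util(uv) \ge (1+\eps)^{j_v} > 0$ and the subsequent update merely scales $util(uv)$ by $(1-\eps)$. By construction, $y_u + y_v = w(uv)$ for every $uv \in M$, so the upper half of the approximate complementary slackness holds with $\eps_1 = 0$.

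For the lower half I would show $y_u + y_v \ge (1-2\eps)\, w(uv)$ for every $uv \in E$ by splitting into three cases. If $uv \in M$ we have equality. If $v$ is matched in $M$ to some $u' \ne u$, then Invariant~\ref{inv2} gives $y_v = util(u'v) \ge (1-2\eps)\, util(uv) = (1-2\eps)(w(uv) - y_u)$, which rearranges to $y_u + y_v \ge (1-2\eps)\, w(uv) + 2\eps\, y_u \ge (1-2\eps)\, w(uv)$. If $v$ is unmatched, Invariant~\ref{inv4} forces $Q_v$ to be empty; since every neighbor originally contributed a pair with index $-k_{min}$ to $Q_v$, the value of $j_v$ must have decreased all the way to $-k_{min}$, and then Invariant~\ref{inv1} combined with $(1+\eps)^{-k_{min}} \le \eps$ yields $util(uv) \le (1+\eps)^{-k_{min}+1} \le 2\eps$ for every $u \in N(v)$. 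Since preprocessing normalized every remaining edge weight to be at least $1$, this absolute bound becomes the multiplicative bound $util(uv) \le 2\eps\, w(uv)$, so $y_u \ge (1-2\eps)\, w(uv)$ and hence $y_u + y_v \ge (1-2\eps)\, w(uv)$.

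Lemma~\ref{lem:approxslack} then yields that $M$ is a $(1-2\eps)$-approximate maximum weight matching with respect to the rounded weights on the preprocessed graph. The final step is to translate the guarantee back to the original instance: dropping edges of weight below $\eps'\, w_{max}/n$ loses at most $\eps'\, w(M^*)$ because a matching has at most $n$ edges and $w_{max} \le w(M^*)$, and rounding weights down to powers of $1+\eps$ costs a further factor of $(1+\eps)^{-1}$ when comparing the rounded optimum to the true optimum. Chaining these losses with $\eps$ chosen as a sufficiently small constant multiple of $\eps'$ (the condition $\eps < \eps'/2$ from preprocessing, up to a constant factor that can be absorbed into $\eps'$) yields a composite ratio of at least $(1-\eps')$ on the original weights.

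The step I expect to be the main obstacle is the unmatched-$v$ case. Unlike the other two cases, which read almost directly off the invariants, it requires combining Invariant~\ref{inv4} (emptiness of $Q_v$), a direct look at the algorithm's list construction (so that emptiness of $Q_v$ implies $j_v = -k_{min}$), Invariant~\ref{inv1} (to turn the index bound into a utility bound), and the very definition of $k_{min}$. It is also the point where the weight normalization from preprocessing is load-bearing rather than cosmetic: without the bound $w(uv) \ge 1$, the invariants give only an additive bound on $util(uv)$, which would not produce a multiplicative approximation.
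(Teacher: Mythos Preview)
Your proof is correct and follows essentially the same route as the paper: the same dual assignment, the same three-way case split, and the same invocation of Lemma~\ref{lem:approxslack} with $\eps_1=0$ and $\eps_0=2\eps$. The only notable differences are that in the unmatched-$v$ case you route through Invariant~\ref{inv1} (obtaining $util(uv)\le 2\eps$) whereas the paper argues directly from the emptiness of $Q_v$ to get $util(uv)<\eps$, and that you explicitly account for the preprocessing losses (rounding and discarding light edges), which the paper absorbs silently into the choice $\eps=\eps'/2$.
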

\begin{proof}
Let $\eps > 0$ be a parameter depending on $\eps'$ that we will choose later.
We begin by choosing an assignment of the dual variables  $y_u$ for $u \in U$ and $y_v$ for $v \in V$ as exactly the values used by the algorithm at termination.
It remains to verify that we satisfy the conditions of Lemma~\ref{lem:approxslack}.
Property (i) is satisfied by Invariant~\ref{inv3} or Invariant~\ref{inv4} (depending on whether $v$ is matched or not) for $\eps_0 = \eps$.
Property (ii) is satisfied by Invariant~\ref{inv5} for $\eps_1 = \eps$.
Property (iii) is satisfied by Invariant~\ref{inv2} and Invariant~\ref{inv3}.
Thus we have satisfied Lemma~\ref{lem:approxslack} with $\eps_0 = \eps$ and $\eps_1 = \eps$. Setting $\eps = \eps'/2$ gives us a $(1-\eps')$-approximate maximum weight matching.
\end{proof}

We have shown the following result that is weaker than what we have set out to prove by a factor of $\log(n\eps^{-1})$ that we will show how to get rid of in Section~\ref{subsec:faster}.
\begin{theorem}
Let $G=(U\cup V, E)$ be a weighted biparitite graph, and $\eps$ be a value such that $1>\eps > 0$. There exists a multiplicative auction algorithm running in time $O(m\eps^{-1}\log(n\eps^{-1}))$ that finds a $(1-\eps)$-approximate maximum weight matching of $G$.
\end{theorem}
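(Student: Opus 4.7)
The plan is to observe that both ingredients needed for this theorem have already been established in the preceding development, so the proof is essentially a collation. First, the running-time paragraph shows that, using bucket sort to build the lists $Q_v$ and observing that each iteration of $\textsc{MatchR}$ removes one element of some $Q_v$ in $O(1)$ time, the total running time is $O\bigl(\sum_{v\in V} |N(v)|(k_{max}+k_{min})\bigr) = O(m(k_{max}+k_{min}))$. Substituting the bounds $k_{max} = O(\eps^{-1}\log(n/\eps'))$ and $k_{min} = \Theta(\eps^{-1}\log(\eps^{-1}))$ derived in the preprocessing discussion yields $O(m\eps^{-1}\log(n\eps^{-1}))$. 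Second, Lemma~\ref{lem:algslack} shows that the matching returned is $(1-\eps')$-approximate when the algorithm is run with input parameter $\eps' \in (0,1)$.

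To assemble the theorem, I would run $\textsc{MultiplicativeAuction}$ with input parameter $\eps' := \eps$, which is admissible since the case $\eps \ge 1$ is trivial (the empty matching is $(1-\eps)$-approximate). Internally, the proof of Lemma~\ref{lem:algslack} sets its parameter to $\eps'/2$, which only affects constants in the running time. Combining the two facts yields a $(1-\eps)$-approximate maximum weight matching in time $O(m\eps^{-1}\log(n\eps^{-1}))$, as required.

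There is no real obstacle here: the invariants and Lemma~\ref{lem:algslack} in the preceding section already carry all the substantive content, and the running-time bookkeeping is a direct consequence of the bucket-sort preprocessing together with the $O(1)$-per-iteration cost of $\textsc{MatchR}$. The genuinely interesting step, namely eliminating the stray $\log(n\eps^{-1})$ factor to reach the $O(m\eps^{-1}\log(\eps^{-1}))$ bound claimed in Theorem~\ref{thm:StaticMatching}, is, as the paper flags, deferred to the next section.
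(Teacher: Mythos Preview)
Your proposal is correct and matches the paper's approach exactly: the paper does not give a separate proof of this theorem but simply states it as an immediate consequence of the running-time paragraph together with Lemma~\ref{lem:algslack}, which is precisely the collation you describe.
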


\subsection{Improving the running time}
\label{subsec:faster}

\myparagraph{Variations to the update rule}
We remark that there is some flexibility in choosing the update rule in line 5(a) of \textsc{MatchR}.

\begin{observation} \label{obs:updaterules}
To compute an $(1-\eps')$-approximate maximum weight matching
the update rule in line 5(a) of \textsc{MatchR} can be any of the following:
\begin{itemize}
    \item[(1)] $y_u\leftarrow y_u + \delta \cdot (\util(uv))$, with $\delta \le \eps'/2$,
    \item[(2)] $y_u\leftarrow y_u+ \delta \cdot (y_u)$, with $\delta \le \eps'/2$,
    \item[(3)] $y_u\leftarrow y_u + \delta \cdot (w(uv))$, with $\delta \le \eps'/4$.
\end{itemize}
\end{observation}
\begin{proof}
It suffices to verify that all invariants hold for different update rules.
Invariant~\ref{inv1}, \ref{inv2}, \ref{inv3}, and \ref{inv4} all hold regardless of the update rule, as they only use the fact that $y_u$ is non-decreasing throughout the algorithm, so we will only focus on Invariant~\ref{inv5}.

We proved that update rule (1) works in Section~\ref{subsec:slower} for $\delta = \eps = \eps'/2$.
Note that if we chose an $\delta < \eps$, we would still have $y_u \le y_u^* + \eps (w(uv)-y_u^*)$, and Invariant~\ref{inv5} holds.

To prove update rule (2) works for $\delta \le \eps = \eps'/2$,
let $v\in V$ be matched to $u\in U$, and $y_u^*$ be the value of $y_u$ right before the update rule. Observe that $y_u = (1+\delta)\cdot y_u^* \le (1+\eps)\cdot y_u^*$ and $y_v = w(uv)-y_u^*$. Furthermore $y_u^* \le w(uv)$ as otherwise $\util(uv) < 0$ and $v$ would not be trying to match to $u$. Thus, 
\[
    y_u + y_v  \le (1+\eps)\cdot y_u^* + w(uv) - y_u^*
    = w(uv) + \eps \cdot y_u^* 
    \le (1+\eps)\cdot w(uv).
\]

Since we have shown that either update rule (1) or (2) work, we can choose the larger of the two update rules, i.e. the update of adding $\delta \cdot \max\{util(uv), y_u\}$ is also a valid update rule.
However, as $\util(uv) + y_u = w(uv)$, this means that $\delta \cdot (w(uv)) \le \eps'/4 \cdot (w(uv))$, so (3) is also a valid update rule.

\end{proof}
\myparagraph{Remarks.}
Update rule (2) offers an alternative way to implement the algorithm with a running time of $O(m\eps^{-1}\log(n\eps^{-1})$.
Update rule (3) shows that $v$ can update the value of $y_u$ at most $O(\eps^{-1})$ times before $\util(uv)$ becomes non-positive, so using update rule (3) results in at most $O(m\eps^{-1})$ total updates. 
Furthermore, a careful reader may have noticed that Invariant~\ref{inv3} only requires for an edge $uv$ that $\util(uv)\le \eps \cdot w(uv)$ when we stop considering that edge, so it suffices to only consider edges in multiples of $\eps/2$ and stop considering an edge when it falls below a value of $\eps \cdot w(uv)$.

\myparagraph{Improved algorithm}
For simplicity of the exposition we will assume $2\eps^{-1}$ is a positive integer (otherwise we can choose a slightly smaller $\eps$).
To improve the running time to $O(m\eps^{-1})$, we use the observation in the above remark that every edge only needs to be updated $O(\eps^{-1})$ times if we use update rule (3) and we only need to consider edges in multiples of $\eps/2$.
Thus it suffices if we change line~1.(b) in \textsc{MultiplicativeAuction} 
to insert copies of an edge $uv$ if it has weight of the form $j \cdot w(uv)$ for some $j=\eps, 3\eps/2, 2\eps, \dots, 1$, after rounding down to the nearest power of $(1+\eps)$.
This change implies that we insert $O(\eps^{-1} |N(v)|)$ items into $Q_v$ for every $v\in V$. 
However, sorting $Q_v$ for every vertex individually, would be too slow. 

We will instead sort on all the rounded edge weights at once, as we have $O(m\eps^{-1} )$ total copies of the edges that can take on values of integers between $-k_{min}$ and $k_{max}$. 
As $k_{max}+k_{min} = O(n\eps^{-1}\log \eps^{-1}) = \poly(n)$, we can actually use radix sort to sort all the edges in linear time.
Afterwards, we can go through the weight classes in decreasing order to insert the pairs into the corresponding $Q_v$. 
We explicitly give the pseudocode below as \textsc{MultiplicativeAuction+}.

\begin{algbox}{\textsc{MultiplicativeAuction+}$(G = (U\cup V, E))$} \label{alg:MultiplicativeAuctionPlus}
$M \leftarrow \emptyset$. \\
$y_u \leftarrow 0$ for all $u \in U$.\\
$Q_v \leftarrow \emptyset$ for all $v \in V$.  \\
$L \leftarrow \emptyset$  \\
For $uv\in E$:
\begin{enumerate}
    \item For $j$ from $\eps$ to $1$ in multiples of $\eps/2$:
    \begin{enumerate}
        \item $i \leftarrow \ilog(j \cdot w(uv))$
        \item Insert the pair $(i, uv)$ into $L$.
    \end{enumerate}
\end{enumerate}
Sort $L$ in decreasing order with radix sort.\\
For $(i, uv)$ in $L$:
\begin{enumerate}
    \item[2.] Insert $(i, uv)$ to the back of $Q_v$.
\end{enumerate}
For $v\in V$:
\begin{enumerate}
    \item[3.] \textsc{MatchR}$(v)$.
\end{enumerate}
Return $M$.
\end{algbox}

\myparagraph{New runtime.}
Radix sorting all $O(m\eps^{-1})$ pairs and initializing the sorted $Q_v$ for all $v\in V$ takes linear time in the number of pairs. 
The total amount of work done in \textsc{MatchR}$(v)$ for a vertex $v\in V$ is $O(|N(v)| \eps^{-1})$ which also sums to $O(m \eps^{-1})$.
Thus we get our desired running time and have proven our main theorem that we restate here.

\staticmatching*

\section{Dynamic algorithm}

There are many monotonic properties of our static algorithm.
For instance, for all $u\in U$ the $y_u$ values strictly increase. 
As another example, for all $v\in V$ the value of $j_v$ strictly decreases.
These monotonic properties allow us to extend \textsc{MultiplicativeAuction+} to a dynamic setting with the following operations.

\decmatching*

\myparagraph{Type (1) operations: Deleting a vertex in $U$.}
To delete a vertex $u\in U$, we can mark $u$ as deleted and skip all edges $uv$ in $Q_v$ for any $v\in V$ in all further computation.
If $u$ were matched to some vertex $v\in V$, that is if there exists an edge $uv \in M$, we need to unmatch $v$ and remove $uv$ from $M$.
All our invariants hold except Invariant~\ref{inv3} for the unmatched $v$. 
To restore this invariant we simply call \textsc{MatchR}$(v)$.

\myparagraph{Type (2) operations: Adding a new vertex to $V$ with all incident edges.}
To add a new vertex $v$ to $V$ with $\ell$ incident edges to $u_1v, \dots, u_\ell v$ with $w(u_1v) > \cdots > w(u_\ell v)$, we can create the queue $Q_v$ by inserting the $O(\eps^{-1})$ pairs such that it is non-increasing in the first element of the pair.
Afterwards we call \textsc{MatchR}$(v)$.
All invariants hold after doing so.

\section*{Declarations}

\paragraph*{Funding}
\erclogowrapped{5\baselineskip}
\emph{Monika Henzinger}: This work was done in part at the University of Vienna. This project has received funding from the European Research Council (ERC) under the European Union's Horizon 2020 research and innovation programme (Grant agreement No. 101019564 ``The Design of Modern Fully Dynamic Data Structures (MoDynStruct)'' and from the Austrian Science Fund (FWF) project ``Fast Algorithms for a Reactive Network Layer (ReactNet)'', P~33775-N, with additional funding from the \textit{netidee SCIENCE Stiftung}, 2020--2024.

\paragraph*{Acknowledgements}
The first author thanks Chandra Chekuri for useful discussions about this paper.

\bibliography{ref}

\newcommand{\etalchar}[1]{$^{#1}$}
\begin{thebibliography}{LMSW22}

\bibitem[ALT21]{AssadiLT21}
Sepehr Assadi, S.~Cliff Liu, and Robert~E. Tarjan.
\newblock An auction algorithm for bipartite matching in streaming and
  massively parallel computation models.
\newblock In Hung~Viet Le and Valerie King, editors, {\em 4th Symposium on
  Simplicity in Algorithms, {SOSA} 2021, Virtual Conference, January 11-12,
  2021}, pages 165--171. {SIAM}, 2021.
\newblock \href {https://doi.org/10.1137/1.9781611976496.18}
  {\path{doi:10.1137/1.9781611976496.18}}.

\bibitem[AO19]{Allen-ZhuO19}
Zeyuan Allen{-}Zhu and Lorenzo Orecchia.
\newblock Nearly linear-time packing and covering {LP} solvers - achieving
  width-independence and -convergence.
\newblock {\em Math. Program.}, 175(1-2):307--353, 2019.
\newblock \href {https://doi.org/10.1007/s10107-018-1244-x}
  {\path{doi:10.1007/s10107-018-1244-x}}.

\bibitem[Ass23]{Assadi23}
Sepehr Assadi.
\newblock A simple (1-{\(\epsilon\)})-approximation semi-streaming algorithm
  for maximum (weighted) matching.
\newblock {\em CoRR}, abs/2307.02968, 2023.
\newblock \href {http://arxiv.org/abs/2307.02968} {\path{arXiv:2307.02968}},
  \href {https://doi.org/10.48550/arXiv.2307.02968}
  {\path{doi:10.48550/arXiv.2307.02968}}.

\bibitem[BDL21]{BernsteinDL21}
Aaron Bernstein, Aditi Dudeja, and Zachary Langley.
\newblock A framework for dynamic matching in weighted graphs.
\newblock In Samir Khuller and Virginia~Vassilevska Williams, editors, {\em
  {STOC} '21: 53rd Annual {ACM} {SIGACT} Symposium on Theory of Computing,
  Virtual Event, Italy, June 21-25, 2021}, pages 668--681. {ACM}, 2021.
\newblock \href {https://doi.org/10.1145/3406325.3451113}
  {\path{doi:10.1145/3406325.3451113}}.

\bibitem[Ber81]{Bertsekas81}
Dimitri~P. Bertsekas.
\newblock A new algorithm for the assignment problem.
\newblock {\em Math. Program.}, 21(1):152--171, 1981.
\newblock \href {https://doi.org/10.1007/BF01584237}
  {\path{doi:10.1007/BF01584237}}.

\bibitem[BGS20]{BernsteinGS20}
Aaron Bernstein, Maximilian~Probst Gutenberg, and Thatchaphol Saranurak.
\newblock Deterministic decremental reachability, scc, and shortest paths via
  directed expanders and congestion balancing.
\newblock In Sandy Irani, editor, {\em 61st {IEEE} Annual Symposium on
  Foundations of Computer Science, {FOCS} 2020, Durham, NC, USA, November
  16-19, 2020}, pages 1123--1134. {IEEE}, 2020.
\newblock \href {https://doi.org/10.1109/FOCS46700.2020.00108}
  {\path{doi:10.1109/FOCS46700.2020.00108}}.

\bibitem[BK23]{BlikstadK23}
Joakim Blikstad and Peter Kiss.
\newblock Incremental (1-{\(\epsilon\)})-approximate dynamic matching in
  o(poly(1/{\(\epsilon\)})) update time.
\newblock {\em CoRR}, abs/2302.08432, 2023.
\newblock \href {http://arxiv.org/abs/2302.08432} {\path{arXiv:2302.08432}},
  \href {https://doi.org/10.48550/arXiv.2302.08432}
  {\path{doi:10.48550/arXiv.2302.08432}}.

\bibitem[BKS22]{BhattacharyaKS22}
Sayan Bhattacharya, Peter Kiss, and Thatchaphol Saranurak.
\newblock Dynamic algorithms for packing-covering lps via multiplicative weight
  updates.
\newblock {\em CoRR}, abs/2207.07519, 2022.
\newblock \href {http://arxiv.org/abs/2207.07519} {\path{arXiv:2207.07519}},
  \href {https://doi.org/10.48550/arXiv.2207.07519}
  {\path{doi:10.48550/arXiv.2207.07519}}.

\bibitem[BLSZ14]{BosekLSZ14}
Bartlomiej Bosek, Dariusz Leniowski, Piotr Sankowski, and Anna Zych.
\newblock Online bipartite matching in offline time.
\newblock In {\em 55th {IEEE} Annual Symposium on Foundations of Computer
  Science, {FOCS} 2014, Philadelphia, PA, USA, October 18-21, 2014}, pages
  384--393. {IEEE} Computer Society, 2014.
\newblock \href {https://doi.org/10.1109/FOCS.2014.48}
  {\path{doi:10.1109/FOCS.2014.48}}.

\bibitem[CKL{\etalchar{+}}22]{ChenKLPPS2022}
Li~Chen, Rasmus Kyng, Yang~P. Liu, Richard Peng, Maximilian~Probst Gutenberg,
  and Sushant Sachdeva.
\newblock Maximum flow and minimum-cost flow in almost-linear time.
\newblock {\em CoRR}, abs/2203.00671, 2022.
\newblock \href {http://arxiv.org/abs/2203.00671} {\path{arXiv:2203.00671}},
  \href {https://doi.org/10.48550/arXiv.2203.00671}
  {\path{doi:10.48550/arXiv.2203.00671}}.

\bibitem[CQ18]{ChekuriQ18}
Chandra Chekuri and Kent Quanrud.
\newblock Randomized {MWU} for positive {LP}s.
\newblock In Artur Czumaj, editor, {\em Proceedings of the Twenty-Ninth Annual
  {ACM-SIAM} Symposium on Discrete Algorithms, {SODA} 2018, New Orleans, LA,
  USA, January 7-10, 2018}, pages 358--377. {SIAM}, 2018.
\newblock \href {https://doi.org/10.1137/1.9781611975031.25}
  {\path{doi:10.1137/1.9781611975031.25}}.

\bibitem[DGS86]{DemangeGG86}
Gabrielle Demange, David Gale, and Marilda Sotomayor.
\newblock Multi-item auctions.
\newblock {\em Journal of political economy}, 94(4):863--872, 1986.

\bibitem[DP14]{DuanP14}
Ran Duan and Seth Pettie.
\newblock Linear-time approximation for maximum weight matching.
\newblock {\em J. {ACM}}, 61(1):1:1--1:23, 2014.
\newblock \href {https://doi.org/10.1145/2529989} {\path{doi:10.1145/2529989}}.

\bibitem[GP13]{GuptaP13}
Manoj Gupta and Richard Peng.
\newblock Fully dynamic {$(1+\epsilon)$}-approximate matchings.
\newblock In {\em 54th Symposium on Foundations of Computer Science, {FOCS}},
  pages 548--557. {IEEE} Computer Society, 2013.
\newblock \href {https://doi.org/10.1109/FOCS.2013.65}
  {\path{doi:10.1109/FOCS.2013.65}}.

\bibitem[HHS22]{DBLP:conf/sand/HanauerH022}
Kathrin Hanauer, Monika Henzinger, and Christian Schulz.
\newblock Recent advances in fully dynamic graph algorithms (invited talk).
\newblock In James Aspnes and Othon Michail, editors, {\em 1st Symposium on
  Algorithmic Foundations of Dynamic Networks, {SAND} 2022, March 28-30, 2022,
  Virtual Conference}, volume 221 of {\em LIPIcs}, pages 1:1--1:47. Schloss
  Dagstuhl - Leibniz-Zentrum f{\"{u}}r Informatik, 2022.
\newblock \href {https://doi.org/10.4230/LIPIcs.SAND.2022.1}
  {\path{doi:10.4230/LIPIcs.SAND.2022.1}}.

\bibitem[HKNS15]{henzinger2015unifying}
Monika Henzinger, Sebastian Krinninger, Danupon Nanongkai, and Thatchaphol
  Saranurak.
\newblock Unifying and strengthening hardness for dynamic problems via the
  online matrix-vector multiplication conjecture.
\newblock In {\em Proc. of the forty-seventh annual ACM symposium on Theory of
  computing}, pages 21--30, 2015.
\newblock \href {https://doi.org/10.1145/2746539.2746609}
  {\path{doi:10.1145/2746539.2746609}}.

\bibitem[Kuh55]{Kuhn1955}
Harold~W Kuhn.
\newblock The hungarian method for the assignment problem.
\newblock {\em Naval research logistics quarterly}, 2(1-2):83--97, 1955.

\bibitem[KY14]{KoufogiannakisY14}
Christos Koufogiannakis and Neal~E. Young.
\newblock A nearly linear-time {PTAS} for explicit fractional packing and
  covering linear programs.
\newblock {\em Algorithmica}, 70(4):648--674, 2014.
\newblock \href {https://doi.org/10.1007/s00453-013-9771-6}
  {\path{doi:10.1007/s00453-013-9771-6}}.

\bibitem[LKK23]{LiuKK23}
Quanquan~C. Liu, Yiduo Ke, and Samir Khuller.
\newblock Scalable auction algorithms for bipartite maximum matching problems.
\newblock {\em CoRR}, abs/2307.08979, 2023.
\newblock \href {http://arxiv.org/abs/2307.08979} {\path{arXiv:2307.08979}},
  \href {https://doi.org/10.48550/arXiv.2307.08979}
  {\path{doi:10.48550/arXiv.2307.08979}}.

\bibitem[LMSW22]{LeMSW22}
Hung Le, Lazar Milenkovic, Shay Solomon, and Virginia~Vassilevska Williams.
\newblock Dynamic matching algorithms under vertex updates.
\newblock In Mark Braverman, editor, {\em 13th Innovations in Theoretical
  Computer Science Conference, {ITCS} 2022, January 31 - February 3, 2022,
  Berkeley, CA, {USA}}, volume 215 of {\em LIPIcs}, pages 96:1--96:24. Schloss
  Dagstuhl - Leibniz-Zentrum f{\"{u}}r Informatik, 2022.
\newblock \href {https://doi.org/10.4230/LIPIcs.ITCS.2022.96}
  {\path{doi:10.4230/LIPIcs.ITCS.2022.96}}.

\bibitem[Mun57]{Munkres1957}
James Munkres.
\newblock Algorithms for the assignment and transportation problems.
\newblock {\em Journal of the society for industrial and applied mathematics},
  5(1):32--38, 1957.

\bibitem[Qua20]{Quanrud20}
Kent Quanrud.
\newblock Nearly linear time approximations for mixed packing and covering
  problems without data structures or randomization.
\newblock In Martin Farach{-}Colton and Inge~Li G{\o}rtz, editors, {\em 3rd
  Symposium on Simplicity in Algorithms, {SOSA} 2020, Salt Lake City, UT, USA,
  January 6-7, 2020}, pages 69--80. {SIAM}, 2020.
\newblock \href {https://doi.org/10.1137/1.9781611976014.11}
  {\path{doi:10.1137/1.9781611976014.11}}.

\bibitem[S{\etalchar{+}}03]{Schrijver03}
Alexander Schrijver et~al.
\newblock {\em Combinatorial optimization: polyhedra and efficiency},
  volume~24.
\newblock Springer, 2003.

\bibitem[WRM16]{WangRM16}
Di~Wang, Satish Rao, and Michael~W. Mahoney.
\newblock Unified acceleration method for packing and covering problems via
  diameter reduction.
\newblock In Ioannis Chatzigiannakis, Michael Mitzenmacher, Yuval Rabani, and
  Davide Sangiorgi, editors, {\em 43rd International Colloquium on Automata,
  Languages, and Programming, {ICALP} 2016, July 11-15, 2016, Rome, Italy},
  volume~55 of {\em LIPIcs}, pages 50:1--50:13. Schloss Dagstuhl -
  Leibniz-Zentrum f{\"{u}}r Informatik, 2016.
\newblock \href {https://doi.org/10.4230/LIPIcs.ICALP.2016.50}
  {\path{doi:10.4230/LIPIcs.ICALP.2016.50}}.

\bibitem[You14]{Young14}
Neal~E. Young.
\newblock Nearly linear-time approximation schemes for mixed packing/covering
  and facility-location linear programs.
\newblock {\em CoRR}, abs/1407.3015, 2014.
\newblock URL: \url{http://arxiv.org/abs/1407.3015}, \href
  {http://arxiv.org/abs/1407.3015} {\path{arXiv:1407.3015}}.

\end{thebibliography}
\bibliographystyle{alphaurl}

\end{document}